\documentclass[a4paper,11pt]{article}

\usepackage{tcs}
\makeatletter
\newcommand{\fallbackcref}[1]{\expandafter\fallbackcref@dispatch#1:\@nil{#1}}
\def\fallbackcref@dispatch#1:#2\@nil#3{%
    \@ifundefined{fallbackcref@#1}{\ref{#3}}{\csname fallbackcref@#1\endcsname{#3}}%
}
\def\fallbackcref@thm#1{Theorem~\ref{#1}}
\def\fallbackcref@theorem#1{Theorem~\ref{#1}}
\def\fallbackcref@conj#1{Conjecture~\ref{#1}}
\def\fallbackcref@lemma#1{Lemma~\ref{#1}}
\def\fallbackcref@lem#1{Lemma~\ref{#1}}
\def\fallbackcref@prop#1{Proposition~\ref{#1}}
\def\fallbackcref@cor#1{Corollary~\ref{#1}}
\def\fallbackcref@def#1{Definition~\ref{#1}}
\def\fallbackcref@eq#1{Equation~\eqref{#1}}
\def\fallbackcref@sec#1{Section~\ref{#1}}
\newcommand{\installfallbackcref}{%
    \renewcommand{\cref}[1]{\fallbackcref{##1}}%
    \renewcommand{\Cref}[1]{\fallbackcref{##1}}%
}
\makeatother

\IfFileExists{cleveref.sty}{
    \crefname{theorem}{Theorem}{Theorems}
    \Crefname{theorem}{Theorem}{Theorems}
    \crefname{thm}{Theorem}{Theorems}
    \Crefname{thm}{Theorem}{Theorems}
    \crefname{conjecture}{Conjecture}{Conjectures}
    \Crefname{conjecture}{Conjecture}{Conjectures}
    \crefname{definition}{Definition}{Definitions}
    \Crefname{definition}{Definition}{Definitions}
    \crefname{lemma}{Lemma}{Lemmas}
    \Crefname{lemma}{Lemma}{Lemmas}
    \crefname{proposition}{Proposition}{Propositions}
    \Crefname{proposition}{Proposition}{Propositions}
    \crefname{corollary}{Corollary}{Corollaries}
    \Crefname{corollary}{Corollary}{Corollaries}
    \crefname{equation}{Equation}{Equations}
    \Crefname{equation}{Equation}{Equations}
}{
    \installfallbackcref
}

\newcommand{\val}{\operatorname{val}}
\newcommand{\OPT}{\operatorname{OPT}}
\newcommand{\SDP}{\operatorname{SDP}}
\newcommand{\one}{\boldsymbol{1}}
\newcommand{\Normal}{\mathcal{N}}

\begin{document}

\title{On the Approximability of Boolean Max-$k$-CSP}
\date{}
\author{
Ainesh Bakshi\\
\texttt{ainesh@nyu.edu}\\
NYU
}

\maketitle
\thispagestyle{empty}

\begin{abstract}
Consider the problem of maximizing the number of satisfied constraints of an arbitrary boolean constraint satisfaction problem with arity $k$. We obtain a polynomial time algorithm that achieves a $(k/2^k)$-approximation, improving on the previous best guarantee of $0.626612\; k/2^k$, due to Makarychev and Makarychev~\cite{makarychev2012approximation}. 
Assuming the Unique Games Conjecture, De and Mossel showed that achieving an approximation ratio better than $(k+1)/2^k$ for odd $k$ and $(k+2)/2^k$ for even $k$, is NP-hard~\cite{de2013explicit}. The main technical ingredient is an extension of a recently established Gaussian comparison inequality, used to resolve the Weak Simplex Conjecture in coding theory~\cite{Mulgund2026GaussianMaxima}.  
\end{abstract}

\newpage
\setcounter{page}{1}

\section{Introduction}

Constraint satisfaction problems (CSPs) provide a unifying framework for several fundamental problems in theoretical computer science. Given a collection of variables and local constraints, the goal is to find an assignment to the variables that satisfies as many constraints as possible. This simple framework captures several fundamental problems, including SAT, Max-Cut, Graph Coloring, and many others. Since their introduction, CSPs have repeatedly served as a testbed for new algorithmic ideas and lower bound techniques.

On the hardness side, the PCP theorem showed that it is NP-hard to distinguish satisfiable CSP instances from those in which every assignment violates a constant fraction of the constraints~\cite{arora1998proof}. This was sharpened further in Khot's work on the Unique Games Conjecture, which postulates that it is NP-hard to disinguish almost satisfiable instances of a particularly structured CSP from instances that are far from satisfiable~\cite{khot2002power}. On the algorithms side, central tools such as semidefinite programs (SDPs), their extension to convex hierarchies, and randomized rounding were developed to analyze CSPs~\cite{goemans1995improved,barak2011rounding}.

A landmark theorem of Raghavendra brought remarkable unity to the approximability of CSPs~\cite{raghavendra2008optimal}. Assuming the Unique Games Conjecture, a generic semidefinite-programming framework achieves the optimal approximation ratio for every finite-domain CSP. Thus, in principle, the approximability of every CSP is characterized by an SDP. Raghavendra's result is nevertheless existential, i.e. for a given CSP, it does not identify the optimal approximation ratio or provide an explicit optimal rounding procedure. Determining the optimal approximation ratio and rounding algorithm remains a difficult, problem-specific challenge. 

We focus on the problem of maximizing the number of satisfied constraints in an arbitrary boolean CSP where each constraint has arity at most $k$, i.e. each variable is a bit and each constraint is an arbitrary predicate on at most $k$ variables. 
In the worst case, a constraint may accept only one of its $2^k$ local assignments, so a random assignment gives a $2^{-k}$ approximation. Charikar, Makarychev, and Makarychev first showed that one can always gain a factor of $\Omega(k)$, obtaining a $0.44 k/2^k$ approximation~\cite{charikar2009near}, and subsequently, Makarychev and Makarychev improved the guarantee to $(0.626612-o_k(1))k/2^k$~\cite{makarychev2012approximation}.

Complementing these algorithmic results, Samorodnitsky and Trevisan first established $\Theta(k/2^k)$ as the right scale under UGC~\cite{samorodnitsky2006gowers}. Austrin and Mossel sharpened the bound to $(1+o_k(1))k/2^k$~\cite{austrin2009approximation}, and the best known hardness result is due to De and Mossel who proved that doing better than $(k+1)/2^k$ for odd $k$ and $(k+2)/2^k$ for even $k$ is NP-hard under UGC~\cite{de2013explicit}. Finally, Chan unconditionally proved that it is NP-hard to achieve a ratio $2k/2^k+\varepsilon$~\cite{chan2016approximation}. Taken together, these results establish that the correct scale is $\Theta(k/2^k)$ and suggest that the optimal approximation ratio is $k/2^k$, up to a $1+o(1)$ factor. 
Writing the optimal approximation ratio as $c_k k/2^k$, Makarychev and Makarychev asked whether $c_k$ converges and, if so, to determine its limit~\cite{makarychev2017approximation}. They conjectured that $c_k\to1$ and, more specifically, that their algorithm from~\cite{makarychev2012approximation} already achieves a $(1-o_k(1))k/2^k$ approximation. We prove this conjecture and obtain the following stronger guarantee:

\begin{theorem}[Main theorem]\label{thm:main}
For every fixed integer $k\geqslant10$ and every $\varepsilon>0$, there is a randomized algorithm that, given any Boolean Max-$k$-CSP instance $\calI$ on $n$ variables with $m$ constraints, outputs an assignment $x$ satisfying
\[
    \E[\val_{\calI}(x)]
    \geqslant 
    \Paren{k/2^{k}-\varepsilon}\OPT(\calI)\,,
\]
and runs in $\poly(n, m,k, \log(1/\varepsilon))$ time.
\end{theorem}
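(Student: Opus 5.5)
We follow the Makarychev--Makarychev (MM) framework and replace only its weakest step, the analysis of the rounding, by the Gaussian comparison inequality mentioned in the abstract.

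\textbf{Reduction to a local-distribution SDP and $k$ vectors.} First, reduce to the case where every constraint has exactly one satisfying assignment. Split each predicate into its satisfying assignments; this preserves $\OPT$ up to a factor we can absorb, as in~\cite{charikar2009near}, since for a predicate with many satisfying assignments a random assignment already beats $k/2^k$. Then, by negating literals, assume each constraint is an AND of $k$ literals. Solve the standard SDP relaxation with local distributions on each constraint, to additive accuracy $\varepsilon$, in $\poly(n,m,k,\log(1/\varepsilon))$ time. For a constraint $C$ on $x_{i_1},\dots,x_{i_k}$ with SDP value $\mu_C$, the local distribution gives unit vectors $v_{i_1},\dots,v_{i_k}$ (centered as in MM) whose Gram matrix is tied to $\mu_C$. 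For example, conditioned on $C$ being satisfied, the vectors $u_j = \pm v_{i_j}$ for the satisfying literals have pairwise inner products bounded below, so their sum has large norm.

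\textbf{Rounding and per-constraint reduction.} We use the MM rounding: draw $g\sim\Normal(0,I)$, set $x_i = \mathrm{sign}(\langle v_i,g\rangle - t)$ with a threshold $t\approx\sqrt{2\ln k}$ chosen so that each literal is true with probability about $1/k$, and combine this with an appropriate bias on the remaining coordinates. This is the scheme MM showed gives $0.6266\,k/2^k$. By linearity of expectation, it then suffices to prove a per-constraint bound: for every AND constraint,
\[
\Pr[C\text{ satisfied}] \geqslant (k/2^k-\varepsilon)\,\mu_C .
\]
The probability that $C$ is satisfied is a Gaussian orthant probability of the form $\Pr[\langle u_j,g\rangle\ge t_j\ \forall j]$, or, after the MM reformulation, an event involving $\max_j \langle u_j,g\rangle$.

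\textbf{Main obstacle: the Gaussian comparison.} The key step is to lower-bound this probability uniformly over all Gram matrices consistent with $\mu_C$. MM lose a constant factor at exactly this point because they use crude Slepian-type bounds. Our plan is as follows.
\begin{enumerate}
\item Show that the worst case is the most symmetric configuration, namely the regular simplex, or equal pairwise correlations at the minimum value forced by $\mu_C$.
\item Extend the comparison inequality of~\cite{Mulgund2026GaussianMaxima}, which proves that the simplex extremizes the expected maximum/tail of Gaussian maxima among unit vectors with a fixed centroid norm. We need its threshold/tail version: compare $\Pr[\max_j\langle u_j,g\rangle\ge t]$, or the orthant analogue, against the simplex configuration with the same $\|\sum u_j\|$.
\item Prove the extension by interpolation. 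Use Gaussian interpolation $\sqrt{s}\,X+\sqrt{1-s}\,Y$ between the two covariance matrices. Express the derivative via Plackett's identity as a sum of bivariate densities weighted by covariance differences, and show that it has a sign using the constraint that the centroid is fixed. This is where I expect most of the work.
\end{enumerate}

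\textbf{Explicit computation and the range of $k$.} Once the per-constraint bound is reduced to the equicorrelated case, the probability becomes a one-dimensional integral: condition on the common Gaussian factor, and the coordinates become independent. We then optimize $t$ and check that the ratio is at least $k/2^k$. This check is where $k\ge 10$ enters: for $k\ge10$ the one-dimensional estimate has enough slack, which can be verified analytically together with a finite numerical check at small $k$. The $\varepsilon$ loss comes only from SDP solution accuracy.
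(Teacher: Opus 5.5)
Your proposal has a genuine gap. The step you defer as ``where I expect most of the work'' is the whole content of the theorem, and neither route you indicate delivers it.

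\textbf{The reduction to a symmetric worst case is not established.} In the interpolation $\sqrt{s}\,X+\sqrt{1-s}\,Y$, Plackett's identity gives a derivative $\sum_{i<j}(\Delta R)_{ij}\,w_{ij}$. The weights $w_{ij}=\partial P/\partial R_{ij}$ are positive but depend on $(i,j)$. Fixing the centroid controls only $\sum_{i<j}(\Delta R)_{ij}$, so no sign follows unless you supply further structure, and you name none. Nor can you symmetrize and then apply Slepian. That needs $P(R)=\Pr[\Normal(0,R)\geqslant0]$ to be convex in $R$, which fails already near $R=I$. There the quadratic term of $P$ is $2^{-k}\frac{4}{\pi^2}\sum\rho_{ij}\rho_{kl}$ over disjoint pairs, which is negative along $\rho_{12}=-\rho_{34}$. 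Also, the SDP gives no lower bound on pairwise inner products. It only forces the oriented Gram matrix to be $qJ+(1-q)G$ with $G$ an arbitrary correlation matrix.

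\textbf{The rounding is never defined.} For a Boolean variable exactly one of $x_i,\bar x_i$ holds, so a threshold making ``each literal true with probability about $1/k$'' does not make sense, and the ``appropriate bias'' is left open. This choice is decisive. Under noiseless hyperplane rounding, a clause with $q=1/(2k)$ whose residuals form a regular simplex is satisfied with probability $2^{-k}e^{-\Omega(k)}$, far below $kq/2^k$. Moreover, one rounding must serve all clauses, so you cannot ``optimize $t$'' after restricting to a single clause. The exact constant $k/2^k$, with no $1-o(1)$ loss, must hold uniformly in $q\in[0,1]$ and for every $k\geqslant10$, which a finite numerical check cannot certify. A smaller point: Trevisan's splitting is exactly value-preserving, and absorbing any constant factor there would already break the exact ratio.

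\textbf{How the paper avoids all of this.} It never identifies a worst-case correlation structure. It rounds with $x_i=\boldsymbol{1}\{\sqrt{\delta}\langle g,d_i\rangle+\sqrt{1-\delta}\,\xi_i\geqslant0\}$ and $\delta=\pi/k$. Each clause's oriented covariance then satisfies $R_C\succeq\eta J$ with $\eta=(1+\delta s)/k$ and $s=kq-1$, hence $\one^{\mathsf T}R_C^{-1}\one\leqslant\eta^{-1}$. Mulgund's comparison against the independent case needs $R\succeq J/k$, so it cannot handle $\eta<1/k$. The paper's new ingredient is a bound valid for every correlation matrix:
\[
\Pr[X\geqslant0]\geqslant2^{-m}\exp\bigl((m-\one^{\mathsf T}R^{-1}\one)/\pi\bigr).
\]
It is proved in three steps. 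Centered exponential tilts $e^{a_ix}\boldsymbol{1}_{\{x\leqslant b_i\}}$ make Mulgund's centered product inequality applicable. A variational problem supplies the fixed point $b=Ra(b)$, which turns the tilted event back into the orthant. The test point $b=\sqrt{2/\pi}\,\one$ then evaluates the bound. With $\delta=\pi/k$ the exponent becomes $ks/(k+\pi s)$. The scalar inequality $\exp(ks/(k+\pi s))\geqslant1+s$ on $[-1,k-1]$, which is where $k\geqslant10$ enters, gives exactly $kq/2^k$ per clause.
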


In exact arithmetic, our rounding achieves the factor $k/2^k$, and the arbitrarily small loss in \cref{thm:main} comes only from finite-precision implementation. Combined with the hardness result of De and Mossel, we can conclude that $\lim_{k\to \infty} c_k = 1$ and the SDP from~\cite{makarychev2012approximation} admits a rounding that achieves an approximation ratio of $k/2^k$, avoiding any $(1+o(1))$-factor loss.

We use the same SDP relaxation as Makarychev and Makarychev~\cite{makarychev2012approximation}. After the standard conjunction reduction and padding, every constraint is a weighted conjunction of exactly $k$ literals. For each variable $x_i$, let $d_i$ be its SDP decision direction. We use a noisy Gaussian hyperplane rounding that interpolates between the SDP rounding and an independent random assignment. In particular, $x_i$ is determined by the sign of $\sqrt{\delta}\,\langle g,d_i\rangle+\sqrt{1-\delta}\,\xi_i,$, where $\delta=\frac{\pi}{k}$, $g$ is a common standard Gaussian vector and the $\xi_i$'s are independent standard Gaussians. Thus, the SDP hyperplane score contributes a $\delta$-fraction of the variance, and the rest comes from independent noise. To analyze a fixed conjunction, we orient each decision direction, and hence its score, toward the bit required by that conjunction. The conjunction is satisfied exactly when all $k$ oriented scores are nonnegative.

The SDP constraints force the covariance of the oriented hyperplane scores to have the form $qJ+(1-q)G$, where $q$ is the clause's SDP value, $J$ is the all ones matrix, and $G$ is an arbitrary correlation matrix. After adding noise, the covariance becomes $R=\delta(qJ+(1-q)G)+(1-\delta)I$, and hence $ R\succeq\Paren{\delta q+\frac{1-\delta}{k}}J.$
Our main technical ingredient is a Gaussian orthant amplification inequality tailored to precisely this situation (see \cref{sec:gaussian}). It shows that if a centered Gaussian vector $X\in\R^k$ has correlation matrix $R\succeq\eta J$, then $\Pr[X\geqslant0]\geqslant2^{-k}\exp((k-\eta^{-1})/\pi)$, regardless of the remaining correlations. With $\delta=\pi/k$, an elementary scalar inequality turns this bound into $\Pr[C\text{ is satisfied}]\geqslant(k/2^k)q$. Summing over clauses proves the theorem. The orthant inequality itself follows by centering exponential tilts of the half-line indicators, applying a  centered Gaussian product inequality due to Mulgund~\cite{Mulgund2026GaussianMaxima}, and evaluating the resulting variational bound at a carefully chosen test point.

\section{Preliminaries}\label{sec:preliminaries}

\paragraph{Constraint satisfaction problems.}
A boolean $k$-CSP instance $\calI$ consists of $n$ variables $x_1,\ldots,x_n\in\zo$ and $m$ weighted clauses $C_1,\ldots,C_m$. Each clause $C_j$ has a support $S_j\subseteq[n]$ of size at most $k$, a predicate $P_j\from\zo^{S_j}\to\zo$, and a nonnegative weight $\lambda_j$. We normalize the weights so that $\sum_{j=1}^m\lambda_j=1$.
For an assignment $x\in\zo^n$, define
\begin{equation*}
    \val_{\calI}(x)
    =\sum_{j=1}^m\lambda_jP_j\Paren{x|_{S_j}},
    \qquad
    \OPT(\calI)=\max_{x\in\zo^n}\val_{\calI}(x).
\end{equation*}
When the instance is fixed, we abbreviate these quantities by $\val(x)$ and $\OPT$. An algorithm is an $\alpha$-approximation if it outputs an assignment $x$ satisfying $\E[\val(x)]\geqslant\alpha\OPT$.

\paragraph{Vectors and matrices.}
For $m\in\N$, let $[m]=\{1,\ldots,m\}$. We use $\langle u,v\rangle$ and $\norm{v}$ for the Euclidean inner product and norm. For a matrix $A$, we use $\norm{A}_{\mathrm{op}}$ and $\norm{A}_F$ to denote operator and Frobenius norms. The identity matrix is denoted by $I$, the all-ones vector is $\one$, and $J=\one\one^\top$ is the all-ones matrix. For symmetric matrices $A$ and $B$, the notation $A\succeq B$ means that $A-B$ is positive semidefinite.
Inequalities between vectors are coordinatewise. Thus, for $v,w\in\R^m$, the relation $v\leqslant w$ means $v_i\leqslant w_i$ for every $i\in[m]$. In particular, $X\leqslant c\one$ denotes the event that $X_i\leqslant c$ for every coordinate $i$.

\paragraph{Probability.}
For a measurable set $A\subseteq\R^m$, its standard Gaussian measure is
\[
    \gamma_m(A)
    =\frac{1}{(2\pi)^{m/2}}
      \int_A\exp\Paren{-\frac{\norm{x}^2}{2}}\diff x.
\]
Equivalently, $\gamma_m(A)=\Pr[g\in A]$ for $g\sim\Normal(0,I_m)$.
For $Z\sim\Normal(0,1)$, let $\Phi(t)=\Pr[Z\leqslant t]$ denote the standard Gaussian cumulative distribution function. More generally, $X\sim\Normal(0,R)$ denotes a centered Gaussian vector with covariance matrix $R\succeq 0$.

\section{Background and Warm Up}\label{sec:overview}

In this section, we describe a $(1-o(1))k/2^k$-approximation for boolean Max-$k$-CSP. This already resolves the conjecture of Makarychev and Makarychev~\cite{makarychev2017approximation} and exposits the connection to the new Gaussian comaprison inequality used to resolve the weak simplex conjecture in coding theory~\cite{Mulgund2026GaussianMaxima}. The algorithm uses the natural semidefinite relaxation of Makarychev and Makarychev~\cite{makarychev2012approximation}. Gaussian hyperplane rounding of this relaxation reduces clause satisfaction to Gaussian orthant probabilities. We use a Gaussian comparison inequality~\cite{Mulgund2026GaussianMaxima} as a black box to lower-bound these probabilities, and combine hyperplane rounding with a random assignment to obtain the stated ratio. We then extend Mulgund's argument to remove the $o(1)$ loss.

We first apply Trevisan's approximation-preserving reduction~\cite{trevisan1998parallel}, replacing each constraint by at most $2^k$ conjunctions, one for each accepting local assignment. Consider a constraint $P_j$ on the variables indexed by $S_j$. For every accepting assignment $a\in\zo^{S_j}$, the corresponding conjunction $C_{j,a}$ requires $x_i=a_i$ for every $i\in S_j$ and inherits the weight $\lambda_j$. For any global assignment $x$, exactly one local assignment agrees with $x|_{S_j}$. If $P_j$ accepts this assignment, exactly one of the new conjunctions is satisfied, and if $P_j$ rejects, no conjunction is satisfied. The new conjunctions therefore contribute exactly $\lambda_jP_j(x|_{S_j})$, preserving the value of every assignment. For each conjunction $C$, let $\supp(C)\subseteq[n]$ be its support and let $b_C\in\zo^{\supp(C)}$ denote its unique satisfying local assignment. Thus, $C$ is satisfied exactly when $x_i=b_C(i)$ for every $i\in\supp(C)$.

\paragraph{SDP setup and relaxation.}
The Makarychev--Makarychev relaxation is easiest to construct from an integral assignment. Fix a unit vector $e$. For every literal $x_i=b$, introduce a global label vector $v_{i,b}$, and given an assignment $x$, set
\[
    v_{i,b}=\begin{cases}
        e,&x_i=b,\\
        0,&x_i\neq b
    \end{cases} \,, \qquad \textrm{ which captures the event $x_i=b$.}
\]
Thus, $\norm{v_{i,b}}^2$ is the indicator that the literal $x_i=b$ is true. For each variable $x_i$, the two literal vectors $v_{i,0}$ and $v_{i,1}$ are orthogonal because the two values are mutually exclusive, and $\norm{v_{i,0}}^2 + \norm{v_{i,1}}^2 = 1$, since exactly one value is chosen.

The clause vector is the analogue of the literal vector for an entire conjunction. For each conjunction $C$, introduce a local vector
\[
    z_C=\begin{cases}
        e,&C\text{ is satisfied},\\
        0,&C\text{ is not satisfied}.
    \end{cases}
\]
The vector is local because every conjunction receives its own $z_C$. Its squared norm is the indicator of the joint event that all requirements of $C$ hold simultaneously. Since satisfying $C$ makes every required literal true and is incompatible with every opposite literal, the integral vectors satisfy
\[
    \langle v_{i,b_C(i)},z_C\rangle=\norm{z_C}^2,
    \qquad
    \langle v_{i,1-b_C(i)},z_C\rangle=0
    \qquad\text{for every }i\in\supp(C).
\]
For example, consider $C=(x_1=1)\wedge(x_2=0)\wedge(x_4=1).$
If $C$ is satisfied, then the required literal vectors $v_{1,1}$, $v_{2,0}$, and $v_{4,1}$ all equal $z_C$, while the conflicting literal vectors $v_{1,0}$, $v_{2,1}$, and $v_{4,0}$ are zero. Therefore, $ \langle v_{1,1},z_C\rangle
    =\langle v_{2,0},z_C\rangle
    =\langle v_{4,1},z_C\rangle
    =\norm{z_C}^2$,
whereas $\langle v_{1,0},z_C\rangle=\langle v_{2,1},z_C\rangle=\langle v_{4,0},z_C\rangle=0$. If $C$ is not satisfied, $z_C=0$ and the same identities hold trivially.

To obtain the SDP relaxation, we drop the requirement that every vector equal either $0$ or $e$, while retaining the constraints. We relax the exact label mass $\norm{v_{i,0}}^2+\norm{v_{i,1}}^2=1$ to $\norm{v_{i,0}}^2+\norm{v_{i,1}}^2\leqslant1$, allowing some mass to remain unassigned. The squared norm $\norm{z_C}^2$ is the fractional score of clause $C$, so the SDP is
\begin{align*}
    \textup{maximize}\qquad
        &\sum_C\lambda_C\norm{z_C}^2 \\
    \textup{subject to}\qquad
        &\norm{v_{i,0}}^2+\norm{v_{i,1}}^2\leqslant1
            &&\text{for every }i\in[n],\\
        &\langle v_{i,0},v_{i,1}\rangle=0
            &&\text{for every }i\in[n],\\
        &\langle v_{i,b_C(i)},z_C\rangle=\norm{z_C}^2
            &&\text{for every }C\text{ and }i\in\supp(C),\\
        &\langle v_{i,1-b_C(i)},z_C\rangle=0
            &&\text{for every }C\text{ and }i\in\supp(C),\\
        &\norm{z_C}^2\leqslant1
            &&\text{for every }C.
\end{align*}
Every integral assignment embeds as above with objective equal to its value, so $\SDP\geqslant\OPT$.

\paragraph{Gaussian hyperplane rounding.}
Gaussian rounding assigns each variable according to which side of a random hyperplane its decision vector lies. The SDP gives two literal vectors for $x_i$, one for each possible value, so the natural decision direction is $d_i=v_{i,1}-v_{i,0}$.
Indeed, for a Gaussian normal vector $g$, the sign of $\langle g,d_i\rangle$ compares the Gaussian scores of the two values and we can wlog assume that $\norm{d_i}=1$. We now sample a standard Gaussian vector $g$ and set $ x_i=\boldsymbol{1}\Braces{\langle g,d_i\rangle\geqslant0}.$
The expected value of the rounded assignment is
\[
    \E[\val(x)]
    =\sum_C\lambda_C\Pr[C\text{ is satisfied}].
\]
It therefore suffices to analyze the probability of a clause being satisfied, one clause at a time. In particular, if the rounding satisfies every clause $C$ with probability at least $\alpha\norm{z_C}^2$, then $\E[\val(x)]
    \geqslant\alpha\sum_C\lambda_C\norm{z_C}^2
    =\alpha\SDP
    \geqslant\alpha\OPT.$

\paragraph{Analyzing a single clause.}
We now fix a clause $C$, and drop the subscripts for convenience. Recall that satisfying $C$ is equivalent to $x_i=b(i)$ for every $i\in\supp(C)$. We therefore orient each decision vector toward the value required by the clause: let $w_i=d_i$ if $b(i)=1$, and let $w_i=-d_i$ otherwise. After this orientation,
\[
    C\text{ is satisfied}
    \quad\Longleftrightarrow\quad
    \langle g,w_i\rangle\geqslant0
    \quad\text{for every }i\in\supp(C)\,,
\]
so the probability of satisfying $C$ is a Gaussian orthant probability. If the $w_i$'s were arbitrary, this probability could range from $0$, if two vectors point in opposite directions, to $1/2$, if all the vectors are identical. The SDP constraints provide the additional structure we need:
\[
    \langle w_i,z\rangle
    =\langle v_{i,b(i)},z\rangle
     -\langle v_{i,1-b(i)},z\rangle
    =\norm{z}^2.
\]
Thus, the projection of every $w_i$ onto $z$ is exactly $z$. Since the $w_i$'s are unit vectors, we may write
\[
    w_i=z+\sqrt{1-\norm{z}^2}\,q_i,
    \qquad q_i\perp z,
    \qquad \norm{q_i}=1.
\]
Let $G$ be the Gram matrix of the residual vectors $q_i$. The SDP imposes no constraints on their pairwise inner products, so $G$ may be an arbitrary correlation matrix. Let $\Sigma$ be the Gram matrix of the $w_i$'s, and thus
\begin{equation}\label{eq:overview-clause-covariance}
    \Sigma=\norm{z}^2J+(1-\norm{z}^2)G.
\end{equation}
Let $X\sim\Normal(0,\Sigma)$ and observe that $\Pr[C\text{ is satisfied}]
    =\Pr[X\geqslant0]$.
We have therefore reduced the analysis of the clause to the following question: how small can this orthant probability be when the covariance contains a common rank-one component $\norm{z}^2J$ and an otherwise arbitrary correlation matrix $G$?

\paragraph{Gaussian comparison inequalities.}
The Gaussian comparison inequality of Mulgund~\cite[Theorem~2.1]{Mulgund2026GaussianMaxima} says that if $R$ is a $k\times k$ correlation matrix, for any $Y\sim\Normal(0,R)$ and every $c\in\R$,
\begin{equation}\label{eq:overview-mulgund}
    \textrm{if } R-J/k\succeq0\,,
    \qquad
    \Pr[Y\leqslant c\one]\geqslant\Phi(c)^k\,,
\end{equation}
where $\Phi(c)$ is the cumulative distribution function. We first consider the case where $1/k\leqslant\norm{z}^2$, and in this range,
\[
    \Sigma-\frac1kJ
    =\Paren{\norm{z}^2-\frac1k}J
      +(1-\norm{z}^2)G
    \succeq0,
\]
so we can apply \cref{eq:overview-mulgund} directly to $\Sigma$ with $c=0$. However, this gives only $\Pr[X\geqslant0]\geqslant2^{-k}$, whereas the desired clausewise bound is $(k/2^k)\norm{z}^2$. When $\norm{z}^2=1$, we have $\Sigma=J$ and the orthant probability is $1/2$, so we assume $\norm{z}^2<1$.
The crucial idea is to only use $1/k$ weight along $J$, which is exactly what we need to apply \cref{eq:overview-mulgund}.  To this end, we define $Q=\frac1kJ+\frac{k-1}{k}G.$
Thus, $Q$ lies on the same line between $G$ and $J$ as $\Sigma$, and satisfies $Q-\frac1kJ=\frac{k-1}{k}G\succeq0.$
We pick $\rho =\frac{k\norm{z}^2-1}{k-1}$ and rewrite $\Sigma$ as follows:
\begin{equation}\label{eq:overview-common-noise}
\begin{aligned}
    \Sigma
    &=(1-\rho)Q+\rho J =\Paren{\frac{1-\rho}{k}+\rho}J
      +(1-\rho)\frac{k-1}{k}G =\norm{z}^2J+(1-\norm{z}^2)G.
\end{aligned}
\end{equation}
Probabilistically, this decomposes $X = \sqrt{1-\rho}\,U+\sqrt\rho\,Z\one$, where $U\sim\Normal(0,Q)$ and $Z$ is an independent standard Gaussian.
Let $a=\sqrt{\rho/(1-\rho)}$ and for a fixed $s\in\R$ condition on $Z=s$. Since $Z$ appears identically in every coordinate,
\begin{align*}
    \Pr[X\geqslant0\mid Z=s] =\Pr\Bracks{
        \sqrt{1-\rho}\,U+\sqrt\rho\,s\one\geqslant0
    } =\Pr[U\geqslant-as\one] =\Pr[U\leqslant as\one],
\end{align*}
where the last equality follows from symmetry. Since \cref{eq:overview-mulgund} applies for any $c$, $\Pr[X\geqslant0\mid Z=s]
    \geqslant\Phi(as)^k.$
Finally, averaging the conditional probability over $Z$, we have
\begin{equation}\label{eq:overview-pk}
    \Pr[X\geqslant0]
    =\E_Z\Pr[X\geqslant0\mid Z]
    \geqslant
    \E_Z\Phi\Paren{
        \sqrt{\frac{k\norm{z}^2-1}{k(1-\norm{z}^2)}}Z
    }^k.
\end{equation}

\paragraph{A one-dimensional Gaussian estimate.}
The main technical contribution here is to show that for $\eta_k=\frac{2\pi\log k+1}{k}$, we can lower bound the expression in \cref{eq:overview-pk} by the desired $k^2/2^k$ whenever $\norm{z}^2\geqslant(1+\eta_k)/k$.
It remains to lower bound the following one-dimensional quantity
\[
    P_k(a)=\E_Z\Phi(aZ)^k,
    \qquad \textrm{where } \qquad
    a^2= (\norm{z}^2-1/k)/ (1-\norm{z}^2).
\]
To this end, let $G$ be a standard Gaussian and let $\mu=\E|G|=\sqrt{2/\pi}$. By symmetry and completing the square, for every $x\in\R$,
\begin{align*}
    \E e^{x|G|}
    &=\frac{2}{\sqrt{2\pi}}
      \int_0^\infty
      \exp\Paren{xt-\frac{t^2}{2}}\diff t
      =2e^{x^2/2}\Phi(x).
\end{align*}
For each fixed $x$, the function $y\mapsto e^{xy}$ is convex, even when $x<0$. Jensen's inequality therefore gives $\E e^{x|G|}\geqslant e^{x\E|G|}=e^{\mu x}$, and we have $\Phi(x) \geqslant \frac12\exp\Paren{\mu x-\frac{x^2}{2}}$.
Using this inequality with $x=aZ$, raising it to the $k$-th power and averaging over $Z$ gives
\begin{align*}
    P_k(a)
    &\geqslant
    2^{-k}\E_Z\exp\Paren{
        k\mu aZ-\frac{ka^2}{2}Z^2
    } =\frac{2^{-k}}{\sqrt{1+ka^2}}
      \exp\Paren{
          \frac{k^2a^2}{\pi(1+ka^2)}
      },
\end{align*}
where the last equality follows by completing the square and using $\mu^2=2/\pi$. For our choice $\eta_k=(2\pi\log k+1)/k$, at the endpoint $\norm{z}^2=(1+\eta_k)/k$ we have $ a^2=\frac{\eta_k}{k-1-\eta_k}$ and $ka^2=o(1)$. Therefore, the
\[
    \frac{k^2a^2}{\pi(1+ka^2)}
    =\frac{k^2\eta_k}{\pi(k-1)(1+\eta_k)}
    =2\log k+\frac1\pi+o(1).
\]
The one-dimensional estimate therefore gives
\[
    P_k(a)
    \geqslant
    \frac{2^{-k}}{\sqrt{1+ka^2}}
    \exp\Paren{2\log k+\frac1\pi+o(1)}
    =\frac{k^2}{2^k}\exp\Paren{\frac1\pi+o(1)}
    \geqslant
    \frac{k^2}{2^k}
\]
for all sufficiently large $k$.
The same bound continues to hold for larger values of $\norm{z}^2$. Indeed, pairing $Z=s$ with $Z=-s$ turns the corresponding contribution to $P_k(a)$ into $p^k+(1-p)^k$, where $p=\Phi(as)\geqslant1/2$. This expression is nondecreasing in $p$, and $p$ is nondecreasing in $a$. Thus, $P_k(a)$ is nondecreasing in $a$, while $a^2=(\norm{z}^2-1/k)/(1-\norm{z}^2)$ is increasing in $\norm{z}^2$. Consequently, every clause in the stated range is satisfied by Gaussian rounding with probability at least $k^2/2^k$.

\paragraph{Randomized rounding.} As discussed above, Gaussian rounding can handle any clause such that $\norm{z}^2\geqslant(1+\eta_k)/k$, and for the rest we can simply output a uniformly random assignment. To execute such a strategy, it suffices to run Gaussian rounding with probability $1/k$, and output a uniformly random assignment with probability $1-1/k$. To see why, let $\alpha_k=\frac{1-1/k}{1+\eta_k}$ and observe,
if $\norm{z_C}^2\geqslant(1+\eta_k)/k$, the Gaussian rounding branch alone implies
\[
    \Pr[C\text{ is satisfied}]
    \geqslant
    \frac1k\cdot\frac{k^2}{2^k}
    \geqslant
    \alpha_k\frac{k}{2^k}\norm{z}^2.
\]
If $\norm{z_C}^2<(1+\eta_k)/k$, the random-assignment branch alone gives
\[
    \Pr[C\text{ is satisfied}]
    \geqslant
    \frac{1-1/k}{2^k} = \frac{\alpha_k(1+\eta_k)}{2^k}
    \geqslant
    \alpha_k\frac{k}{2^k}\norm{z_C}^2.
\]
Thus, every clause $C$ is satisfied with probability at least $\alpha_k(k/2^k)\norm{z_C}^2$. Summing over clauses yields an $\alpha_k \frac{k}{2^k}$ approximation. Finally, 
\[
    \alpha_k = \frac{1-1/k}{1+\eta_k} 
    =\frac{1-1/k}{1+(2\pi\log k+1)/k}
    =1-O\Paren{\frac{\log k}{k}},
\]
as desired. This argument already suffices to resolve Open Problem 2 of Makarychev and Makarychev~\cite{makarychev2017approximation}. We note that the $O(\log k/ k)$ loss in this argument comes from entirely from the low-score clauses, since the random assigment ignores the covariance structure entirely. Our main technical contribution is to remove this loss by proving a comparison inequality that remains useful below the critical threshold of Mulgund~\cite{Mulgund2026GaussianMaxima}. In the subsequent section, we state and prove this inequality, and in \cref{sec:main-proof} we give a formal proof of \cref{thm:main}.

\section{A Gaussian Orthant Amplification Inequality}\label{sec:gaussian}

For independent standard Gaussians, the positive orthant has probability exactly $2^{-m}$. We show that for Gaussians with correlation matrix $R$, the improvement is controlled by the precision of the all-ones direction, i.e. $\one^\top R^{-1} \one$. In particular, we show that when the all ones direction has lower precision than the independent case, the orthan probability is amplified by an exponential factor.

\begin{theorem}[Gaussian orthant amplification]\label{thm:zero-comparison}
Let $R\succ0$ be an $m\times m$ correlation matrix and let $X\sim\Normal(0,R)$. Then
\begin{equation}\label{eq:zero-comparison}
    \Pr[X\geqslant0]
    \geqslant
    2^{-m}\exp\Paren{
        \frac{m-\one^{\mathsf T}R^{-1}\one}{\pi}
    }.
\end{equation}
\end{theorem}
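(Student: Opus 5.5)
The plan is to lower-bound $\log\Pr[X\geqslant0]$ through a change-of-measure (Gibbs) variational principle. I would then control the resulting correlated Gaussian functional with Mulgund's inequality~\cite{Mulgund2026GaussianMaxima}, in its centered product form, rather than with the threshold form \cref{eq:overview-mulgund}. Write $p_R$ for the density of $\Normal(0,R)$, let $s=\one^{\mathsf T}R^{-1}\one$, and let $\mu=\sqrt{2/\pi}$. For any probability law $Q$ supported on the orthant $\{x\geqslant0\}$,
\[
    \log\Pr[X\geqslant0]\;\geqslant\;-\mathrm{KL}(Q\,\|\,\Normal(0,R)),
\]
so it suffices to exhibit one good test law $Q$.

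The natural candidate is the law of $(|Z_1|,\ldots,|Z_m|)$ with $Z_i$ i.i.d.\ standard Gaussians, possibly exponentially tilted. For the untilted product of half-normals, the second moments are $\E Y_i^2=1$ and $\E Y_iY_j=\mu^2=2/\pi$ for $i\neq j$. The quadratic term in $\log p_R$ then contributes
\[
    -\tfrac12\Paren{(1-\tfrac2\pi)\operatorname{tr}R^{-1}+\tfrac2\pi s}.
\]
This already produces the desired term $-s/\pi$. The entropy and normalization give $-m\log2$ plus terms depending on the spectrum of $R$ through $\log\det R$ and $\operatorname{tr}R^{-1}$. At $R=I$ the total is exactly $-m\log 2$, matching the claimed equality case.

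In general, however, the spectral remainder
\[
    \tfrac12\sum_j\Bigl(1-\log\lambda_j-(1-\tfrac2\pi)\lambda_j^{-1}\Bigr)-\tfrac m\pi
\]
need not be nonnegative, since it fails for eigenvalues $\lambda_j>1$. A pure product test law is therefore too crude. The remedy suggested by the paper's overview is to work with the tilted half-line indicators $f_\theta(x)=\one\{x\geqslant0\}e^{\theta x}$. One writes $\Pr[X\geqslant0]=\E\prod_i f_\theta(X_i)e^{-\theta X_i}$ and splits $f_\theta$ into its Gaussian mean plus a centered part $f_\theta-\E f_\theta(Z)$. One then applies Mulgund's centered product inequality, which should assert that for a correlation matrix $R$ the mixed centered terms of $\E\prod_i(\cdot)$ contribute nonnegatively. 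The point is that this step requires no condition like $R\succeq J/k$. What remains is an explicit Gaussian integral of $\exp(-\theta\one^{\mathsf T}X)$ against product weights. By completing the square, it equals $2^{-m}$ times an exponential whose exponent is a concave function of $\theta$ involving $\mu\theta m$, $\theta^2 m$ and the factor $\one^{\mathsf T}R\one$ (or $s$ after a change of variables).

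The final step is to choose the test point $\theta$, depending only on $m$ and $s$, so that the exponent becomes at least $(m-s)/\pi$. I expect something like $\theta\propto\mu(1-s/m)$, by analogy with the one-dimensional estimate $\Phi(x)\geqslant\tfrac12 e^{\mu x-x^2/2}$ in the warm-up. The main obstacle is this middle step: applying the centered product inequality in exactly the form it is proved, and checking that the leftover quadratic-in-$\theta$ terms depend on $R$ only through $\one^{\mathsf T}R^{-1}\one$. That dependence is what lets the spectral terms cancel, as they did at $R=I$ in the product computation above. If they do not cancel cleanly, I would try the linear shift $v=c\,\one$ in the Cameron--Martin formula, which makes the factor $e^{-c^2 s/2}$ appear naturally, and then optimize over $c$.
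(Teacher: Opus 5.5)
Your proposal has a genuine gap, located at the step you yourself flag as the main obstacle. Mulgund's centered product inequality (\cref{lem:centered-product}) is not a statement about mean-subtracted functions. It requires nonnegative, bounded, log-concave $f_i$ whose Gaussian \emph{first moment} vanishes, $\E[Zf_i(Z)]=0$, and it concludes $\E\prod_i f_i(X_i)\geqslant\prod_i\E f_i(Z)$. Your version splits $f_\theta=\E f_\theta(Z)+(f_\theta-\E f_\theta(Z))$ and treats the cross terms as nonnegative, and that is false. For $m=2$ and correlation $-1$, we have $f_\theta(X_1)f_\theta(X_2)=\boldsymbol{1}_{\{X_1=0\}}$ almost surely, so $\E[f_\theta(X_1)f_\theta(X_2)]=0<(\E f_\theta(Z))^2$ for every $\theta$.

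Your tilted indicators also cannot satisfy the lemma's actual hypothesis. Since $\E[Z\boldsymbol{1}_{\{Z\geqslant0\}}e^{\theta Z}]>0$ for every $\theta$, no tilt centers them while the threshold sits at $0$. You must move the threshold: use $f_i(x)=e^{a_ix}\boldsymbol{1}_{\{x\leqslant b_i\}}$ with $b_i>0$, where $a_i=a(b_i)$ is the unique tilt with $\int_{-\infty}^{b_i}xe^{a_ix}\,\diff\gamma_1(x)=0$. The lemma then bounds $\E[e^{a^{\mathsf T}X}\boldsymbol{1}_{\{X\leqslant b\}}]$. Undoing the tilt shifts the mean to $Ra$ and yields $\Pr[X\leqslant b-Ra]$, so you land on the orthant only if $b=Ra(b)$.

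This fixed-point condition is the missing idea. Your Cameron--Martin fallback is the right mechanism: with a general shift $v$, and after reflecting $X\mapsto-X$, it is exactly the identity above with $b=v$ and $a=R^{-1}v$. But with $v=c\one$, the tilt is $e^{-c(R^{-1}\one)^{\mathsf T}X}$ on the thresholds $X_i\geqslant-c$. Each factor is centered only if $c(R^{-1}\one)_i=a(c)$ for all $i$, that is, only if $R\one\propto\one$. A scalar tilt $\theta\one$ fails for the same reason. Worse, it incurs the penalty $\tfrac{\theta^2}{2}\one^{\mathsf T}R\one$, which grows under positive correlation instead of producing amplification.

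The paper handles general $R$ variationally.
\begin{itemize}
\item For $F(b)=\log\int_{-\infty}^{b}e^{a(b)x}\,\diff\gamma_1(x)$ one has $F'(b)=a(b)$. So the critical points of $\Psi(b)=\sum_iF(b_i)-\tfrac12b^{\mathsf T}R^{-1}b$ are exactly the fixed points.
\item $F<0$ gives coercivity. The bound $a(b)b<1$, which comes from positivity of the tilted variance, gives $F(b)\to-\infty$ as $b\to0$.
\item Hence an interior maximizer $b^\star$ exists, and $\log\Pr[X\leqslant0]\geqslant\Psi(b^\star)\geqslant\Psi(\mu\one)$.
\end{itemize}
So $\one^{\mathsf T}R^{-1}\one$ enters through the quadratic term of $\Psi$ evaluated at the uniform test threshold $b=\mu\one$, which is generally not itself a fixed point. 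Since $a(\mu)=\mu$ and $F(\mu)=-\log2+1/\pi$, this gives exactly $-m\log2+(m-\one^{\mathsf T}R^{-1}\one)/\pi$. Note that this test point does not depend on $s$, unlike your guess $\theta\propto\mu(1-s/m)$.
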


We require the following centered Gaussian product inequality, which captures the fact that once the log-concave functions are centered under the Gaussian measure, arbitrary correlations can only increase the joint expectation relative to the independent setting.

\begin{lemma}[Centered Gaussian product inequality~{\normalfont\cite[Theorem~4.1]{Mulgund2026GaussianMaxima}}]\label{lem:centered-product}
Let $R$ be an $m\times m$ correlation matrix and let $X\sim\Normal(0,R)$. Suppose that $f_1,\ldots,f_m\from\R\to[0,\infty)$ are bounded and log-concave and, for every $i\in[m]$, $\expecf{Z\sim\Normal(0,1)}{f_i(Z)}>0,$ and $\expecf{Z\sim\Normal(0,1)}{Zf_i(Z)}=0.$
Then,
\begin{equation*} 
    \expecf{X\sim\Normal(0,R)}{\prod_{i=1}^m f_i(X_i)}
    \geqslant
    \prod_{i=1}^m\expecf{Z\sim\Normal(0,1)}{f_i(Z)}.
\end{equation*}
\end{lemma}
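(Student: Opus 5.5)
The plan is to interpolate between the independent case and covariance $R$ along a one-parameter Gaussian family, and to show that the logarithm of the joint expectation has nonnegative derivative along the path. The derivative at the independent endpoint is controlled by the centering hypothesis together with a Brascamp--Lieb variance bound. Write $h(x)=\prod_{i=1}^m f_i(x_i)$, which is bounded and log-concave on $\R^m$.

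By a standard approximation (replace $R$ by $(1-\epsilon)R+\epsilon I$ and let $\epsilon\to0$, using boundedness of $h$), I may assume $R\succ0$. Set $A=R^{-1}-I$ and parametrize by precision: $P_s=I+sA$ for $s\in[0,1]$, so $P_0=I$ and $P_1=R^{-1}$. Define
\[
    F(s)=\E_{X\sim\Normal(0,P_s^{-1})}h(X)
    =\frac{\det(P_s)^{1/2}}{(2\pi)^{m/2}}\int h(x)\exp\Paren{-\tfrac12x^{\mathsf T}P_sx}\diff x.
\]
Then $F(0)=\prod_i\E f_i(Z)$ and $F(1)$ is the left-hand side of the lemma. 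Let $\mu_s$ be the probability measure proportional to $h(x)\exp(-\frac12x^{\mathsf T}P_sx)$. Differentiating gives
\[
    \frac{\diff}{\diff s}\log F(s)=\tfrac12\operatorname{tr}(P_s^{-1}A)-\tfrac12\E_{\mu_s}\bigl[x^{\mathsf T}Ax\bigr].
\]
So the goal reduces to showing $\E_{\mu_s}[x^{\mathsf T}Ax]\leqslant\operatorname{tr}(P_s^{-1}A)$ for every $s\in[0,1]$.

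\textbf{Step 1 (the endpoint $s=0$).} Here $\mu_0=\bigotimes_i\mu^{(i)}$ with $\mu^{(i)}\propto f_i\,\diff\gamma_1$. The hypothesis $\E Zf_i(Z)=0$ says exactly that each $\mu^{(i)}$ is centered. By independence, all off-diagonal terms of $\E_{\mu_0}x^{\mathsf T}Ax$ vanish, so
\[
    \E_{\mu_0}x^{\mathsf T}Ax=\sum_iA_{ii}\operatorname{Var}(\mu^{(i)}).
\]
Each $\mu^{(i)}$ has log-concave density with respect to $\gamma_1$, so the Brascamp--Lieb inequality gives $\operatorname{Var}(\mu^{(i)})\leqslant1$. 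Moreover $A_{ii}=(R^{-1})_{ii}-1\geqslant0$, since for a correlation matrix $(R^{-1})_{ii}R_{ii}\geqslant1$. Hence
\[
    \E_{\mu_0}x^{\mathsf T}Ax\leqslant\sum_iA_{ii}=\operatorname{tr}(A),
\]
and the derivative at $s=0$ is nonnegative. Centering is used essentially here: without it, the squared means would enter and could make the derivative negative.

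\textbf{Step 2 (propagating to all $s$; the main obstacle).} For $s>0$ the measure $\mu_s$ no longer factorizes and is no longer centered coordinatewise. So Step 1 does not apply verbatim, and I expect this step to be the crux.
\begin{itemize}
\item[(i)] First attempt: use the Brascamp--Lieb inequality in matrix form, $\operatorname{Cov}(\mu_s)\preceq P_s^{-1}$. This holds because $\mu_s$ has log-concave density relative to $\Normal(0,P_s^{-1})$. Combined with the mean $m_s=\E_{\mu_s}x$, it gives
\[
    \E_{\mu_s}x^{\mathsf T}Ax=\operatorname{tr}(A\operatorname{Cov}(\mu_s))+m_s^{\mathsf T}Am_s.
\]
This splits the problem into a covariance part and a mean part, which I would treat separately.
\item[(ii)] Covariance part: $A$ is not positive semidefinite, so $\operatorname{Cov}\preceq P_s^{-1}$ does not by itself bound $\operatorname{tr}(A\operatorname{Cov})$. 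I would therefore rewrite $\operatorname{tr}(A\,\cdot)$ using $A=(P_s-I)/s$. This gives
\[
    \operatorname{tr}\bigl(A(P_s^{-1}-\operatorname{Cov})\bigr)=s^{-1}\Bigl[\operatorname{tr}\bigl(P_s(P_s^{-1}-\operatorname{Cov})\bigr)-\operatorname{tr}\bigl(P_s^{-1}-\operatorname{Cov}\bigr)\Bigr].
\]
The aim is then to sign this using the diagonal structure of $h$. Specifically, the contribution of the $I$ part only involves the marginal variances, and those are still bounded by one.
\item[(iii)] Mean part: control $m_s^{\mathsf T}Am_s$ by a continuity/ODE argument starting from $m_0=0$. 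The intended route is to show that $m_s$ stays small enough that this term is dominated by the slack produced in (ii).
\end{itemize}

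If the direct monotonicity of Step 2 resists, I would fall back on a different strategy. I would run the same computation along the covariance path $R_t=(1-t)I+tR$, where the Gaussian interpolation formula gives
\[
    F'(t)=\tfrac12\sum_{i\ne j}R_{ij}\,\E\bigl[f_i'(X_i)f_j'(X_j)\textstyle\prod_{l\ne i,j}f_l(X_l)\bigr].
\]
At $t=0$ this vanishes by centering, since $\E f_i'(Z)=\E Zf_i(Z)=0$. I would then try to prove that $t\mapsto\log F(t)$ is convex, using log-concavity of $h$ through the Prékopa--Leindler inequality. A convex function with zero derivative at $t=0$ satisfies $F(1)\geqslant F(0)$, which is the claim.
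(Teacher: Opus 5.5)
Your setup and Step~1 are correct. The precision-path formula for $\frac{\diff}{\diff s}\log F(s)$ holds, and at $s=0$ centering, the Brascamp--Lieb bound $\operatorname{Var}(\mu^{(i)})\leqslant 1$ and $A_{ii}\geqslant0$ give a nonnegative derivative. But that is only a first-order check at the independent point. Step~2, which you rightly call the crux, is never carried out, and the statement it targets is false: $\frac{\diff}{\diff s}\log F(s)\geqslant0$ does not hold on all of $[0,1]$.

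Here is a counterexample. Take $m=2$, $f_1=f_2=\mathbf 1_{[-b,b]}$ (bounded, log-concave, symmetric hence centered), and $R$ with off-diagonal entry $\rho\in(0,1)$. At $s=1$ you have $P_1^{-1}=R$, so $\operatorname{tr}(P_1^{-1}A)=\operatorname{tr}(I-R)=0$, and $\mu_1$ is $\Normal(0,R)$ conditioned on the square. Use the coordinates $u=(x_1+x_2)/2$ and $v=(x_1-x_2)/2$. Under $\Normal(0,R)$ they are independent with variances $(1\pm\rho)/2$, the square becomes $\{|u|+|v|\leqslant b\}$, and
\[
    x^{\mathsf T}Ax=\frac{2\rho}{1-\rho}\,v^2-\frac{2\rho}{1+\rho}\,u^2 .
\]
As $\rho\to1$, conditioning barely affects the tiny variable $v$, so $\frac{2\rho}{1-\rho}\E_{\mu_1}v^2\to1$. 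Meanwhile $\frac{2\rho}{1+\rho}\E_{\mu_1}u^2\to\E[Z^2\mid |Z|\leqslant b]<1$. Hence, for $\rho$ close to $1$, $\E_{\mu_1}x^{\mathsf T}Ax>0=\operatorname{tr}(P_1^{-1}A)$, and $\log F$ is strictly decreasing near $s=1$, even though $F(1)\geqslant F(0)$. Here $m_s=0$ by symmetry, so the failure lies entirely in your covariance part (ii): the slack you hope to extract there does not exist. No refinement of (ii)--(iii) can work along this path; a proof must compare the endpoints globally, not through a pointwise sign of this derivative.

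The fallback does not repair this. The covariance-path derivative does vanish at $t=0$, but log-convexity of $t\mapsto F(t)$ is where all the content would lie, and you give no argument for it. The tool you name points the wrong way: Pr\'ekopa--Leindler yields log-\emph{concavity} of marginals and convolutions, not log-convexity in the covariance parameter. Log-convexity also cannot follow from log-concavity of $h$ alone. For the non-centered pair $f_1=f_2=\mathbf 1_{[0,\infty)}$ one has $F(r)=\frac14+\frac{\arcsin r}{2\pi}$ and $(\log F)''(0)=-4/\pi^2<0$. So centering would have to enter the convexity argument at every $t$, not just at $t=0$, and your sketch gives no mechanism for this.

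For context, the paper does not prove this lemma. It imports it as Mulgund's Theorem~4.1 and uses it as a black box, so a complete self-contained argument is exactly what is needed here. What you have is the first-order check at the independent point plus an unproved monotonicity claim, which is false along the precision path.
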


We would like to apply this lemma to the indicators of the negative half-line since we want to lower bound $\Pr[X\leq0] = \E \prod_{i \in [n]} \mathbf{1}_{x_i\leq 0}$. These indicators, are log-concave but not centered under the Gaussian measure, so \cref{lem:centered-product} does not apply directly. Our proof follows Mulgund's high-level approach, but requiring a statement only for the positive orthant permits a more fine-grained analysis. We begin by replacing $\mathbf{1}_{x_i\leq 0}$ with $e^{ax} \mathbf{1}_{x_i\leq b_i }$ and show that there is a unique choice of $a_i$ such that $e^{ax} \mathbf{1}_{x_i\leq b_i }$ has zero mean under the Gaussian measure. The tilted distribution now corresponds to $x\sim\calN(Ra , R)$ and normalizing back to a distribution we have that 
\begin{equation*}
    e^{-a^{\top}R a/2} \expecf{}{e^{a^\top x} \mathbf{1}_{x\leq b}} = \Pr_{x\sim \calN(Ra, R)}[ x\leq b] =  \Pr_{x\sim \calN(0, R)}[ x\leq b - Ra]\,,
\end{equation*}
and thus we require $b=Ra$, which results in a variational problem. We show that it suffices to pick a test point with each $b_i$ being the same, i.e. $b = \sqrt{2/\pi} \mathbf{1}$.

\begin{proof}[Proof of \cref{thm:zero-comparison}]
By symmetry, it suffices to lower bound $\Pr[X\leqslant0]$.
We first show that, for every $b>0$, there is a unique $a(b)>0$ satisfying $\int_{-\infty}^b x e^{a(b)x}\,\diff\gamma_1(x)=0$. Indeed, the left-hand side is strictly increasing as a function of $a$, since its derivative is $\int_{-\infty}^b x^2e^{ax}\,\diff\gamma_1(x)>0$. At $a=0$, it equals $-\Phi'(b)<0$. As $a\to\infty$, the positive contribution from any interval contained in $(0,b]$ grows, while the negative contribution vanishes. Hence, there is a unique positive zero $a(b)$. 
Next, we define
\[
    F(b)
    =
    \log\int_{-\infty}^b e^{a(b)x}\,\diff\gamma_1(x).
\]
The quantity $e^{F(b)}$ is the Gaussian mass of the centered tilt $e^{a(b)x}\boldsymbol{1}_{\{x\leqslant b\}}$.
Let $b\in(0,\infty)^m$ be an arbitrary vector, and let $a(b)=(a(b_1),\ldots,a(b_m))$ coordinatewise. For notational convenience, let $a=a(b)$. The functions $f_i(x)=e^{a_i x}\boldsymbol{1}_{\{x\leqslant b_i\}}$ are bounded and log-concave, and their Gaussian first moments are centered by construction. Therefore, \cref{lem:centered-product} gives
\begin{equation}\label{eq:centered-product-application}
    \E\Bracks{
        e^{a^{\mathsf T}X}
        \boldsymbol{1}_{\{X\leqslant b\}}
    }
    \geqslant
    \exp\Paren{\sum_{i=1}^mF(b_i)}.
\end{equation}
Exponential tilting by $a$ shifts the mean of $\Normal(0,R)$ from $0$ to $Ra$. Combining this identity with \eqref{eq:centered-product-application}, we obtain
\begin{equation}\label{eq:shifted-orthant-bound}
    \Pr[X\leqslant b-Ra] =  e^{-\frac{1}{2} a^\top R a}\E\Bracks{e^{a^{\mathsf T}X}\boldsymbol{1}_{\{X\leqslant b\}}}
    \geqslant
    \exp\Paren{
        \sum_{i=1}^mF(b_i)
        -\frac12a^{\mathsf T}Ra
    }.
\end{equation}
Thus, we recover the desired orthant whenever $b=Ra$. It remains to find a vector satisfying this fixed-point condition. We do this by expressing the condition as a first-order optimality statement. Consider the potential
\begin{equation}\label{eq:orthant-variational}
    \Psi(b)
    =
    \sum_{i=1}^mF(b_i)
    -\frac12b^{\mathsf T}R^{-1}b,
    \qquad
    b\in(0,\infty)^m.
\end{equation}
The reason for this choice is that its gradient encodes the fixed-point condition. To compute it, let $I(b)=\int_{-\infty}^b e^{a(b)x}\,\diff\gamma_1(x)$, so that $F(b)=\log I(b)$. The implicit function theorem shows that $a(b)$ is continuously differentiable. Integration by parts and the centering identity give $e^{a(b)b}\Phi'(b)=a(b)I(b)$. Moreover, when differentiating $I(b)$, the term involving $a'(b)$ vanishes by the centering identity. Therefore, $F'(b)=a(b)$ and $\nabla\Psi(b)=a(b)-R^{-1}b$. Thus, every interior maximizer of $\Psi$ satisfies $b=Ra(b)$.
We now show that such a maximizer exists, i.e. $\Psi$ attains its maximum at a point $b^\star\in(0,\infty)^m$, the interior of the nonnegative orthant $[0,\infty)^m$. This requires ruling out maximizing sequences that either escape to infinity or approach a coordinate hyperplane.
We first rule out a maximizing sequence escaping to infinity. For fixed $b>0$, the function $t\mapsto\int_{-\infty}^b e^{tx}\,\diff\gamma_1(x)$ is strictly convex, and the centering identity says that its derivative vanishes at $t=a(b)$. Hence, $e^{F(b)}
    \leqslant
    \int_{-\infty}^b\diff\gamma_1(x)
    =
    \Phi(b)
    <1,$ and $F(b)<0$, which yields
\[
    \Psi(b)
    \leqslant
    -\frac12b^{\mathsf T}R^{-1}b
    \longrightarrow-\infty
    \qquad\text{as}\qquad
    \norm{b}\longrightarrow\infty.
\]

It remains to rule out the possibility that a maximizing sequence has some coordinate tending to zero.
Fix one coordinate $b>0$. The centering identity says that the probability measure proportional to $e^{a(b)x}\boldsymbol{1}_{\{x\leqslant b\}}\,\diff\gamma_1(x)$ has mean zero. As $b\to 0$, its positive support $(0,b]$ disappears, so the measure must be tilted increasingly strongly to the right, i.e. $a(b)\to\infty$. 
We would like to conclude that the total mass $e^{F(b)}=I(b)$ tends to zero. The integration-by-parts identity used above gives $ e^{F(b)}= \frac{e^{a(b)b}\Phi'(b)}{a(b)}$.
The denominator tends to infinity, but we must still rule out the possibility that the exponential term in the numerator compensates for it. This is precisely the role of a second integration by parts. Let $\nu_b$ be the probability measure obtained by normalizing the centered tilt. Since $\nu_b$ has mean zero,
\[
    \Var_{\nu_b}(X)
    =
    \frac{1}{I(b)}
    \int_{-\infty}^b x^2e^{a(b)x}\,\diff\gamma_1(x)
    =
    1-ba(b).
\]
The variance is always positive, and hence $ba(b)<1$, which inturn implies $e^{a(b)b}\leqslant e$, and returning to the preceding identity gives $e^{F(b)}
    \leqslant
    \frac{e\Phi'(0)}{a(b)}
    \longrightarrow0.$ 
Therefore, $F(b)\to-\infty$ as $b\to 0$, and no maximizing vector can have any coordinate tending to zero.
These two estimates imply that $\Psi$ attains its maximum in the interior. 
Indeed, the superlevel set $\Braces{b\in(0,\infty)^m:\Psi(b)\geqslant\Psi(\one)}$ is nonempty and, for some $0<\delta<M<\infty$, is a closed subset of $[\delta,M]^m$. It is therefore compact, so $\Psi$ attains its maximum there and hence over all of $(0,\infty)^m$. 
Let $b^\star$ be any maximizer.
Since $b^\star$ lies in the interior, its gradient vanishes and we have $b^\star=Ra(b^\star)$.
Applying \eqref{eq:shifted-orthant-bound} to $b^\star$, the event on the left becomes $X\leqslant0$ as desired, and $a(b^\star)^{\mathsf T}Ra(b^\star)=(b^\star)^{\mathsf T}R^{-1}b^\star$. Thus,
\begin{align}
    \log\Pr[X\leqslant0]
    &\geqslant
    \Psi(b^\star)=
    \max_{b\in(0,\infty)^m}
    \Braces{
        \sum_{i=1}^mF(b_i)
        -\frac12b^{\mathsf T}R^{-1}b
    }.
    \label{eq:orthant-variational-bound}
\end{align}
We do not need to compute $b^\star$ explicitly, since any test point lower bounds the maximum in \eqref{eq:orthant-variational-bound}. We use $b=\mu\one$, where $\mu=\sqrt{2/\pi}$. To evaluate $\Psi$ at this point, we first compute $a(\mu)$ and $F(\mu)$.

Recall that $\Phi(0)=1/2$ and $\Phi'(0)=1/\sqrt{2\pi}=\mu/2$. Completing the square and changing variables gives
\[
    \int_{-\infty}^{\mu}x e^{\mu x}\,\diff\gamma_1(x)
    =
    e^{\mu^2/2}\Paren{-\Phi'(0)+\mu\Phi(0)}
    =0.
\]
This is precisely the centering identity defining $a(\mu)$, and hence uniqueness gives $a(\mu)=\mu$. Using this value and completing the square once more,
\[
    e^{F(\mu)}
    =
    \int_{-\infty}^{\mu}e^{\mu x}\,\diff\gamma_1(x)
    =
    e^{\mu^2/2}\Phi(0)
    =
    \frac12e^{\mu^2/2}.
\]
Therefore, $F(\mu)=-\log2+\mu^2/2=-\log2+1/\pi$. Substituting $b=\mu\one$ into \eqref{eq:orthant-variational-bound}, we obtain
\begin{align*}
    \log\Pr[X\leqslant0]
    \geqslant \Psi(\mu\one)=
    mF(\mu)
    -\frac{\mu^2}{2}\one^{\mathsf T}R^{-1}\one=
    -m\log2
    +\frac{m-\one^{\mathsf T}R^{-1}\one}{\pi}.
\end{align*}
Exponentiating and using the symmetry of $X$ gives
\[
    \Pr[X\geqslant0]
    =
    \Pr[X\leqslant0]
    \geqslant
    2^{-m}\exp\Paren{
        \frac{m-\one^{\mathsf T}R^{-1}\one}{\pi}
    },
\]
which proves the claim.
\end{proof}

The following implication is the form we use in the rounding analysis.

\begin{corollary}[Rank-one orthant amplification]\label{cor:rank-one-orthant}
Let $0<\eta\leqslant1$, let $R$ be an $m\times m$ correlation matrix satisfying $R-\eta J\succeq0$, and let $X\sim\Normal(0,R)$. Then
\begin{equation}\label{eq:rank-one-orthant}
    \Pr[X\geqslant0]
    \geqslant
    2^{-m}\exp\Paren{\frac{m-\eta^{-1}}{\pi}}.
\end{equation}
\end{corollary}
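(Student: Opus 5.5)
The plan is to deduce the corollary from \cref{thm:zero-comparison}. The only thing to establish is the bound $\one^{\mathsf T}R^{-1}\one\leqslant\eta^{-1}$ whenever $R\succ0$ and $R\succeq\eta J$. Plugging this into \eqref{eq:zero-comparison} then gives \eqref{eq:rank-one-orthant} directly.

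\textbf{The linear-algebra step.} Assume $R\succ0$. Set $u=R^{-1/2}\one$. The hypothesis $R\succeq\eta\one\one^{\mathsf T}$ is equivalent, after conjugating by $R^{-1/2}$, to
\[
    I\succeq\eta\,uu^{\mathsf T}.
\]
Since $uu^{\mathsf T}$ has largest eigenvalue $\norm{u}^2$, this forces $\eta\norm{u}^2\leqslant1$, that is, $\one^{\mathsf T}R^{-1}\one\leqslant\eta^{-1}$.

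An alternative route uses the Schur complement of the block matrix
\[
    \begin{pmatrix}R&\one\\ \one^{\mathsf T}&\eta^{-1}\end{pmatrix}.
\]
Its positive semidefiniteness is equivalent both to $R-\eta J\succeq0$ and to $\eta^{-1}-\one^{\mathsf T}R^{-1}\one\geqslant0$. Either argument suffices.

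\textbf{The singular case.} The corollary does not assume $R\succ0$, while \cref{thm:zero-comparison} does. I expect this to be the main (mild) obstacle, and I would handle it by perturbation. Let
\[
    R_\varepsilon=(1-\varepsilon)R+\varepsilon I .
\]
This is a correlation matrix and is positive definite. It also satisfies $R_\varepsilon\succeq(1-\varepsilon)\eta J$, so the nonsingular case gives
\[
    \Pr[X_\varepsilon\geqslant0]\geqslant 2^{-m}\exp\Paren{\frac{m-((1-\varepsilon)\eta)^{-1}}{\pi}},
\]
where $X_\varepsilon\sim\Normal(0,R_\varepsilon)$.

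Now let $\varepsilon\to0$. The orthant probability is continuous in the covariance because the boundary of the orthant is null under the limit law, provided each coordinate has unit variance and hence no atoms at $0$. To make this rigorous, realize $X_\varepsilon=\sqrt{1-\varepsilon}\,X+\sqrt{\varepsilon}\,\xi$ with $\xi$ an independent standard Gaussian vector. Then $X_\varepsilon\to X$ almost surely. Since $\Pr[X_i=0]=0$ for every $i$, the indicator $\one\{X_\varepsilon\geqslant0\}$ converges to $\one\{X\geqslant0\}$ almost surely, and dominated convergence gives convergence of the probabilities. The right-hand side converges to the claimed bound, which completes the argument.
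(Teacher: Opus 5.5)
Your proposal is correct and follows the paper's proof. In the nonsingular case you derive $\one^{\mathsf T}R^{-1}\one\leqslant\eta^{-1}$ by conjugating with $R^{-1/2}$, which is equivalent to the paper's sandwich computation $\one^{\mathsf T}R^{-1}(R-\eta J)R^{-1}\one\geqslant0$; for singular $R$ you use the same perturbation $R_\varepsilon=(1-\varepsilon)R+\varepsilon I$ with $\eta$ replaced by $(1-\varepsilon)\eta$, and your coupling $X_\varepsilon=\sqrt{1-\varepsilon}\,X+\sqrt{\varepsilon}\,\xi$ with dominated convergence makes rigorous the limit step that the paper only asserts.
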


\begin{proof}
Suppose first that $R\succ0$. Since $R-\eta J\succeq0$,
\[
    0
    \leqslant
    \one^{\mathsf T}R^{-1}(R-\eta J)R^{-1}\one
    =
    \one^{\mathsf T}R^{-1}\one
    -\eta\Paren{\one^{\mathsf T}R^{-1}\one}^2,
\]
and hence $\one^{\mathsf T}R^{-1}\one\leqslant\eta^{-1}$. The claim now follows from \cref{thm:zero-comparison}.
For singular $R$, set $R_\varepsilon=(1-\varepsilon)R+\varepsilon I$ and $\eta_\varepsilon=(1-\varepsilon)\eta$. Then $R_\varepsilon-\eta_\varepsilon J\succeq0$, so the nonsingular case applies. Letting $\varepsilon\to0$ completes the proof.
\end{proof}

\section{Algorithm and Analysis}\label{sec:main-proof}

We now prove the main theorem using the Gaussian orthant amplification inequality from the previous section. Formally,

\begin{theorem}[Boolean Max-$k$-CSP]\label{thm:main-restated}
For every fixed integer $k\geqslant10$ and every $\varepsilon>0$, there is a randomized algorithm that, given any Boolean Max-$k$-CSP instance $\calI$ on $n$ variables, with $m$ constraints, outputs an assignment $x\in\{0,1\}^n$ satisfying
\[
    \E[\val_{\calI}(x)]
    \geqslant
    \Paren{\frac{k}{2^k}-\varepsilon}\OPT(\calI),
\]
and runs in $\poly(n, m,k, \log(1/\varepsilon))$ time.
\end{theorem}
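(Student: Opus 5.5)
The plan is to combine the SDP relaxation and clause-wise reduction from \cref{sec:overview} with the noisy hyperplane rounding described in the introduction, and to analyze each clause with \cref{cor:rank-one-orthant}.

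\textbf{Setup.} First apply Trevisan's reduction, so that every constraint becomes a weighted conjunction. Then pad each conjunction of arity $r<k$ to arity exactly $k$: replace it by the $2^{k-r}$ conjunctions obtained by appending all possible values of $k-r$ fresh dummy variables, each copy keeping the original weight. For every assignment exactly one copy agrees with the dummy variables, so the value of every assignment is preserved.

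\textbf{Solving the SDP.} Solve the Makarychev--Makarychev SDP to additive accuracy $\varepsilon$ with the ellipsoid method or an interior point method. This takes $\poly(n,m,k,\log(1/\varepsilon))$ time, and the $\varepsilon$ loss is the only loss in the final guarantee. We may make every decision direction $d_i=v_{i,1}-v_{i,0}$ a unit vector: since $\norm{d_i}^2=\norm{v_{i,0}}^2+\norm{v_{i,1}}^2\leqslant 1$, we can add a fresh orthogonal coordinate to $v_{i,1}$ and $v_{i,0}$ carrying the missing mass. Because the new coordinate is orthogonal to every $z_C$, this preserves all constraints, including $\langle w_i,z_C\rangle=\norm{z_C}^2$.

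\textbf{Rounding.} Sample a standard Gaussian $g$ and independent $\xi_i\sim\Normal(0,1)$, and set $x_i=\one\{\sqrt\delta\langle g,d_i\rangle+\sqrt{1-\delta}\,\xi_i\geqslant0\}$ with $\delta=\pi/k<1$.

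\textbf{Single-clause covariance.} Fix a clause $C$ and write $q=\norm{z_C}^2$. Orient the $w_i$ as in \cref{sec:overview}. By \eqref{eq:overview-clause-covariance}, the oriented scores are $\Normal(0,R)$ with
\[
R=\delta\bigl(qJ+(1-q)G\bigr)+(1-\delta)I,
\]
which is a correlation matrix. Since $G\succeq0$ and $I\succeq J/k$, we get $R\succeq \eta J$ with
\[
\eta=\delta q+\frac{1-\delta}{k}=\frac{1+\pi q-\pi/k}{k}\in(0,1].
\]

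\textbf{Applying the orthant bound.} \cref{cor:rank-one-orthant} with $m=k$ gives $\Pr[C\text{ satisfied}]\geqslant 2^{-k}\exp((k-\eta^{-1})/\pi)$. With $t=kq\in[0,k]$ and $c=\pi/k$, a direct computation gives
\[
\frac{k-\eta^{-1}}{\pi}=\frac{t-1}{1+c(t-1)}=:\psi(t).
\]
It therefore suffices to prove the scalar inequality $h(t)=\psi(t)-\log t\geqslant0$ on $(0,k]$; the case $t=0$ is trivial. Given this, every clause is satisfied with probability at least $(k/2^k)\norm{z_C}^2$. Summing over clauses gives $\E[\val(x)]\geqslant (k/2^k)\SDP\geqslant (k/2^k-\varepsilon)\OPT$.

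\textbf{The scalar inequality (main obstacle).} This step is tight: $h(1)=0$ and $h'(1)=0$, so there is no slack at $t=1$. I would analyze the sign of
\[
h'(t)=\frac{1}{(1+c(t-1))^2}-\frac1t.
\]
This sign equals the sign of
\[
t-(1+c(t-1))^2=(t-1)\bigl(1-2c-c^2(t-1)\bigr).
\]
Since $c<1/2$ for $k\geqslant10$, the factor $1-2c-c^2(t-1)$ is positive near $t=1$. Hence:
\begin{itemize}
\item on $(0,1)$, $h'<0$, so $h$ decreases down to $h(1)=0$;
\item on $(1,t^*)$ with $t^*=1+(1-2c)/c^2$, $h'>0$, so $h$ increases from $0$;
\item beyond $t^*$, $h$ decreases.
\end{itemize}
So $h$ is unimodal on $[1,k]$, and it suffices to check $h\geqslant0$ at the right endpoint $t=k$. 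There we need
\[
h(k)=\frac{k-1}{1+\pi(k-1)/k}-\log k\geqslant0.
\]
At $k=10$ this reads $9/3.827\approx2.35\geqslant\log10\approx2.30$, which holds narrowly. This is presumably why the hypothesis is $k\geqslant10$. For larger $k$ the inequality persists, because the left term grows roughly like $k/(1+\pi)$ while $\log k$ grows only logarithmically. I expect this endpoint estimate, together with checking that the finite-precision SDP solution still satisfies the hypotheses of \cref{cor:rank-one-orthant} (for instance via the $R_\varepsilon$ perturbation already used in its proof), to be the only delicate parts of the proof.
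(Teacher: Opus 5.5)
Your proposal matches the paper's proof of the theorem step for step:

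\begin{itemize}
\item the same reduction to weighted $k$-literal conjunctions (padding after Trevisan's reduction instead of before is equivalent);
\item the same completion of $d_i$ to a unit vector;
\item the same noisy rounding with $\delta=\pi/k$;
\item the same domination $R\succeq\eta J$ with $\eta=\delta q+(1-\delta)/k$;
\item the same appeal to \cref{cor:rank-one-orthant}. Your exponent $\psi(t)=(t-1)/(1+c(t-1))$ is the paper's $ks/(k+\pi s)$ with $t=1+s$.
\end{itemize}

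The only genuinely different ingredient is your proof of the scalar inequality $\log t\leqslant\psi(t)$, which the paper isolates as \cref{lem:logarithmic-mean}. The paper writes $\log t=(t-1)/L(t)$ with the logarithmic mean $L(t)=\int_0^1t^u\,\mathrm{d}u$ and uses concavity of $L$ twice. The tangent line at $t=1$ (slope $1/2$) handles $t\leqslant1$, and the chord from $1$ to $k$ handles $t\in[1,k]$. You instead factor the sign of $h'$ as $(t-1)\bigl(1-2c-c^2(t-1)\bigr)$ and read off that $h$ decreases to $h(1)=0$ on $(0,1]$ and is unimodal on $[1,k]$. Both arguments use the same hypothesis $c=\pi/k\leqslant1/2$. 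Both reduce to the identical endpoint condition $h(k)\geqslant0$, equivalently $\log k/k\leqslant(k-1)/(k+\pi(k-1))$. Your version is equally elementary and also locates the turning point $t^\star$. The concavity version explains more structurally why nothing but the endpoint can fail.

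Two small points need tightening.

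First, ``the left term grows roughly like $k/(1+\pi)$'' does not prove $h(k)\geqslant0$ for every $k\geqslant10$. The paper closes this by observing that $\log k/k$ is decreasing while $(k-1)/(k+\pi(k-1))=1/\bigl(\pi+k/(k-1)\bigr)$ is increasing, so the numerical check at $k=10$ suffices.

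Second, solving the SDP to additive accuracy $\varepsilon$ only gives $\E[\val(x)]\geqslant(k/2^k)\OPT-(k/2^k)\varepsilon$. That is not the multiplicative guarantee claimed. To fix it, discard constraints that accept nothing and renormalize the weights, so that $\OPT\geqslant2^{-k}$, and then solve to accuracy $\varepsilon/k$. The paper is equally brief about finite precision, so this is bookkeeping rather than a real gap.
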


\begin{mdframed}
\begin{algorithm}[Noisy Gaussian rounding]\label{alg:noisy-rounding}\mbox{}
\begin{description}
\item[Input:] A set $\mathcal C$ of weighted $k$-literal conjunctions. Each $C\in\mathcal C$ is specified by a support $\supp(C)\subseteq[n]$ of size $k$, a required assignment $b_C\in\zo^{\supp(C)}$, and a weight $\lambda_C\geqslant0$. The conjunction $C$ is satisfied exactly when $x_i=b_C(i)$ for every $i\in\supp(C)$.

\item[Operations:]\mbox{}
\begin{enumerate}
    \item Compute an optimal solution to the following SDP:
    \begin{align*}
        \textup{maximize}\qquad
            &\sum_{C\in\mathcal C}\lambda_C\norm{z_C}^2 \\
        \textup{subject to}\qquad
            &\norm{v_{i,0}}^2+\norm{v_{i,1}}^2\leqslant1
                &&\text{for every }i\in[n],\\
            &\langle v_{i,0},v_{i,1}\rangle=0
                &&\text{for every }i\in[n],\\
            &\langle v_{i,b_C(i)},z_C\rangle=\norm{z_C}^2
                &&\text{for every }C\in\mathcal C\text{ and }i\in\supp(C),\\
            &\langle v_{i,1-b_C(i)},z_C\rangle=0
                &&\text{for every }C\in\mathcal C\text{ and }i\in\supp(C),\\
            &\norm{z_C}^2\leqslant1
                &&\text{for every }C\in\mathcal C.
    \end{align*}

    \item Let $d_i=v_{i,1}-v_{i,0}$. The SDP constraints imply $\norm{d_i}\leqslant1$. Add a component in a fresh direction to make each $d_i$ a unit vector, choosing these directions mutually orthogonal and orthogonal to every SDP vector. Continue to denote the resulting vector by $d_i$.

    \item Let $\delta=\pi/k$. Sample a standard Gaussian vector $g$ in the extended SDP vector space and independent standard Gaussians $\xi_1,\ldots,\xi_n$. Let
    \begin{equation*}
        x_i
        =
        \boldsymbol{1}\Braces{
            \sqrt{\delta}\,\langle g,d_i\rangle
            +\sqrt{1-\delta}\,\xi_i
            \geqslant0
        }.
    \end{equation*}
\end{enumerate}

\item[Output:] The assignment $x\in\zo^n$.
\end{description}
\end{algorithm}
\end{mdframed}

We require the following scalar inequality, which shows that the function $s\mapsto\exp\Paren{ks/(k+\pi s)}$ lies above its tangent line $1+s$ throughout the interval $[-1,k-1]$.

\begin{lemma}[Logarithmic-mean bound]\label{lem:logarithmic-mean}
For every integer $k\geqslant10$ and every $s\in[-1,k-1]$,
\[
    \exp\Paren{\frac{ks}{k+\pi s}}
    \geqslant
    1+s.
\]
\end{lemma}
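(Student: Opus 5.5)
The plan is to take logarithms and reduce the claim to a one-variable calculus statement whose only delicate point is the right endpoint. The endpoint $s=-1$ is trivial, since the right-hand side is $0$ and the left-hand side is positive; note $k+\pi s\geqslant k-\pi>0$ on the whole interval, so the exponent is well defined. For $s\in(-1,k-1]$ both sides are positive, and it suffices to show
\[
    h(s)=\frac{ks}{k+\pi s}-\log(1+s)\geqslant 0 .
\]
We have $h(0)=0$, which reflects that $1+s$ is the tangent line of the left-hand side at $s=0$.

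Next I would compute
\[
    h'(s)=\frac{k^2}{(k+\pi s)^2}-\frac{1}{1+s}.
\]
Clearing the positive denominators, $h'(s)$ has the sign of
\[
    k^2(1+s)-(k+\pi s)^2 = s\Paren{k^2-2\pi k-\pi^2 s}.
\]
For $k\geqslant10$ we have $k^2-2\pi k>0$, and for $s\in(-1,0)$ the factor $k^2-2\pi k-\pi^2 s$ is positive. Hence $h'<0$ on $(-1,0)$, so $h$ decreases to $h(0)=0$, giving $h\geqslant0$ on $(-1,0]$. On $(0,\infty)$, $h'$ is positive for $s<s^\star=(k^2-2\pi k)/\pi^2$ and negative afterwards. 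Thus $h$ is unimodal on $[0,k-1]$ (increasing, then possibly decreasing), and its minimum over $[0,k-1]$ is attained at an endpoint. Since $h(0)=0$, everything reduces to the single inequality $h(k-1)\geqslant0$, i.e.
\[
    \frac{k(k-1)}{k+\pi(k-1)}\geqslant\log k\qquad\text{for all integers }k\geqslant10 .
\]

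This endpoint inequality is the main obstacle, because it is fairly tight at $k=10$. There the left side is $90/(10+9\pi)\approx 2.352$, while $\log 10\approx2.303$. I would verify $k=10$ numerically with explicit rational bounds on $\pi$ and $\log 10$. Then I would show that $g(k)=\frac{k(k-1)}{(1+\pi)k-\pi}-\log k$ is increasing for real $k\geqslant10$. The first term has derivative
\[
    \frac{(1+\pi)k^2-2\pi k+\pi}{((1+\pi)k-\pi)^2},
\]
which tends to $1/(1+\pi)\approx0.24$ and, I expect, stays above $0.2$ for $k\geqslant10$. A direct check: the numerator minus $0.2$ times the denominator is a quadratic in $k$ with positive leading coefficient that is positive at $k=10$ and increasing beyond. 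That derivative therefore exceeds $1/k\leqslant 0.1$, the derivative of $\log k$, so $g$ is increasing. Combining $g(10)>0$ with monotonicity gives $h(k-1)\geqslant0$ for all $k\geqslant10$, completing the proof. (If $s^\star\geqslant k-1$ the endpoint argument still applies, since then $h$ is increasing on $[0,k-1]$.)
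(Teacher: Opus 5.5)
Your proof is correct. It shares the paper's skeleton: split at $s=0$, reduce the range $s\geqslant0$ to the endpoint inequality $\log k\leqslant k(k-1)/(k+\pi(k-1))$, check $k=10$, and use monotonicity in $k$. The mechanism inside each step is different.

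The paper substitutes $t=1+s$ and works with the logarithmic mean $L(t)=(t-1)/\log t=\int_0^1 t^u\,\mathrm{d}u$. Its concavity gives the tangent bound $L(t)\leqslant 1+(t-1)/2$ for $t\leqslant1$ and the chord bound between $t=1$ and $t=k$ for $t\geqslant1$. So a single structural fact covers both regimes, with no differentiation of the target function.

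You instead differentiate $h(s)=ks/(k+\pi s)-\log(1+s)$ directly and use the factorization $k^2(1+s)-(k+\pi s)^2=s(k^2-2\pi k-\pi^2 s)$. This gives monotone decrease on $(-1,0)$ and unimodality on $[0,k-1]$. The two routes need the same threshold on the negative side: your $k>2\pi$ is the paper's $\alpha=\pi/k\leqslant1/2$, which is where $h''(0)=1-2\pi/k$ becomes nonnegative.

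Your route also gives a bit more information. It locates the critical point $s^\star=(k^2-2\pi k)/\pi^2$, which shows $h$ is increasing on all of $[0,k-1]$ once $k\geqslant16$, so the endpoint check only matters for small $k$.

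For monotonicity in $k$, the paper's device is slicker. Dividing by $k$ rewrites the endpoint inequality as $\log k/k\leqslant (k-1)/((1+\pi)k-\pi)$. The left side is decreasing, and the right side has derivative $1/((1+\pi)k-\pi)^2>0$. This avoids your quadratic estimate comparing the derivative against $0.2$, though that estimate is also correct.
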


\begin{proof}
Let $\alpha=\pi/k$ and $t=1+s\in[0,k]$. The case $t=0$ is immediate. For $t>0$, consider the logarithmic mean
\[
    L(t)
    =
    \frac{t-1}{\log t}
    =
    \int_0^1t^u\,\diff u,
    \qquad
    L(1)=1.
\]
For every $u\in[0,1]$, the function $t\mapsto t^u$ is concave, and hence $L$ is concave. First, we consider the case where $0<t\leqslant1$. Differentiating the integral at $t=1$ gives $L'(1)=1/2$. Concavity bounds $L$ from above by its tangent line at $1$, and hence
\[
    L(t)
    \leqslant
    1+\frac{t-1}{2}
    \leqslant
    1+\alpha(t-1),
\]
where the second inequality uses $\alpha\leqslant1/2$ and $t-1\leqslant0$. It  then follows that
\[
    \log t
    =
    \frac{t-1}{L(t)}
    \leqslant
    \frac{t-1}{1+\alpha(t-1)} = \frac{sk}{k+ \pi s}.
\]

Now consider the case where $1\leqslant t\leqslant k$. For this range, we first verify the inequality at the endpoint $k$, i.e.  $L(k)\geqslant1+\alpha(k-1)$. This is equivalent to $ \frac{\log k}{k} \leqslant \frac{k-1}{k+\pi(k-1)}.$ 
For $k\geqslant10$, the left-hand side is decreasing in $k$, while the right-hand side is increasing. It therefore suffices to check $k=10$. Using $\log 10<7/3$ and $\pi<22/7$, we have $\log (10)/10 <  9/(10+9\pi)$ as desired. 
Concavity now places $L(t)$ above the chord joining $1$ and $k$ and using that $L(k)\geqslant1+\alpha(k-1)$, we have
\[
    L(t)
    \geqslant
    \frac{k-t}{k-1}L(1)
    +\frac{t-1}{k-1}L(k)
    \geqslant
    1+\alpha(t-1).
\]
Since $t-1$ is nonnegative, this implies $\log t= \frac{t-1}{L(t)}\leqslant \frac{t-1}{1+\alpha(t-1)}$, and the same inequality holds throughout $t\in(0,k]$. Exponentiating and substituting $t=1+s$ and $\alpha=\pi/k$ proves the claim.
\end{proof}

We now combine the rank-one orthant amplification inequality with \cref{lem:logarithmic-mean} to prove \cref{thm:main-restated}. The analysis is clausewise. For a conjunction $C$, let $q=\norm{z_C}^2$ and $s=kq-1$. We show that the covariance matrix of its oriented Gaussian scores dominates $\eta J$, where $\eta=(1+\delta s)/k$. The rank-one orthant amplification inequality then gives
\[
    \Pr[C\text{ is satisfied}]
    \geqslant 2^{-k}
    \exp\Paren{\frac{k-\eta^{-1}}{\pi}}.
\]
Our choice $\delta=\pi/k$ makes the exponent $ks/(k+\pi s)$, and \cref{lem:logarithmic-mean} lower-bounds the right-hand side by $kq$. Thus, the rounding satisfies each conjunction with probability at least $(k/2^k)\norm{z_C}^2$. Summing these clausewise guarantees proves the theorem.

\begin{proof}[Proof of \cref{thm:main-restated}]

We begin by padding every predicate of arity less than $k$ with fresh variables to get clauses of arity exactly $k$, and then apply Trevisan's reduction from \cref{sec:overview}. Both operations preserve the value of every assignment, so it suffices to work with the resulting weighted $k$-literal conjunctions. We retain the notation $\lambda_C$, $\supp(C)$, and $b_C$ from the overview.
The integral embedding from \cref{sec:overview} shows that $\SDP\geqslant\OPT$. We now analyze an arbitrary conjunction $C\in\mathcal C$. Let $q=\norm{z_C}^2$. For each $i\in[n]$, let
\[
    s_i
    =
    \sqrt{\delta}\,\langle g,d_i\rangle
    +\sqrt{1-\delta}\,\xi_i
\]
denote the score used by the rounding, so that $x_i=\boldsymbol{1}\Braces{s_i\geqslant0}$. To express the event that $C$ is satisfied as a positive orthant event, we change the sign of this score whenever $C$ requires value $0$. Namely, for every $i\in\supp(C)$, let $ y_{C,i}=(2b_C(i)-1)s_i.$
If $b_C(i)=1$, then $y_{C,i}=s_i$, so a positive score assigns the required value $1$. If $b_C(i)=0$, then $y_{C,i}=-s_i$, so a positive oriented score means that $s_i$ is negative and the rounding assigns the required value $0$. Since $s_i$ is a nondegenerate Gaussian, it is zero with probability zero. Thus, almost surely, $y_{C,i}\geqslant0$ exactly when $x_i=b_C(i)$, and $C$ is satisfied exactly when $y_{C,i}\geqslant0$ for every $i\in\supp(C)$.

We next express the covariance of these oriented scores. Recall that $d_i=v_{i,1}-v_{i,0}$, and let $w_{C,i}=(2b_C(i)-1)d_i$. Thus, $w_{C,i}=d_i$ when $C$ requires value $1$ and $w_{C,i}=-d_i$ when it requires value $0$. We can then rewrite
\[
    y_{C,i}
    =
    \sqrt{\delta}\,\langle g,w_{C,i}\rangle
    +\sqrt{1-\delta}\,(2b_C(i)-1)\xi_i.
\]
Using the constraints that $\langle v_{i,b_C(i)},z_C\rangle=q$ and $\langle v_{i,1-b_C(i)},z_C\rangle=0$ we have
\[
    \langle w_{C,i},z_C\rangle
    =
    \langle v_{i,b_C(i)},z_C\rangle
    -\langle v_{i,1-b_C(i)},z_C\rangle
    =
    q\,.
\]
As in \cref{eq:overview-clause-covariance}, the Gram matrix of the $w_{C,i}$'s therefore has the form $qJ+(1-q)G_C$,
for some $k\times k$ correlation matrix $G_C$. Multiplying $\xi_i$ by $2b_C(i)-1$ does not change its distribution, so the oriented noises remain independent standard Gaussians. Consequently, $y_C=(y_{C,i})_{i\in\supp(C)}$ is a centered Gaussian with correlation matrix $R_C=\delta\Paren{qJ+(1-q)G_C} +(1-\delta)I,$
and $\Pr[C\text{ is satisfied}] = \Pr[y_C\geqslant0].$
Let $s=kq-1\in[-1,k-1],$ and let $\eta = \delta q+\frac{1-\delta}{k} = \frac{1+\delta s}{k}$. Then, we have
\begin{equation*}
     R_C-\eta J = \delta(1-q)G_C +(1-\delta)\Paren{I-\frac1kJ}\succeq0.
\end{equation*}
Indeed, $G_C\succeq0$, while $I-J/k$ is the orthogonal projector onto the subspace perpendicular to $\one$. Moreover, $R_C\succeq(1-\delta)I\succ0$ and $0<\eta \leqslant\delta+\frac{1-\delta}{k}<1.$
Thus, applying \cref{cor:rank-one-orthant} we have
\[
   \Pr[C\text{ is satisfied}]
    \geqslant 2^{-k}
    \exp\Paren{\frac{k-\eta^{-1}}{\pi}}.
\]
Since $\eta=(1+\delta s)/k$ and $\delta=\pi/k$, $\frac{k-\eta^{-1}}{\pi} = \frac{ks}{k+\pi s}$, and applying \cref{lem:logarithmic-mean}, we conclude that
\[
    \Pr[C\text{ is satisfied}]
    \geqslant 2^{-k}
    \exp\Paren{\frac{ks}{k+\pi s}}
    \geqslant
    2^{-k}\;(1+s)
    =
    k \; 2^{-k}\; \norm{z_C}^2,
\]
which concludes the proof.
This proves the guarantee in exact arithmetic. Using standard arguments for solving a SDP and implementing the Gaussian rounding procedure in finite precision, we can get the stated running time.
\end{proof}

\section*{Acknowledgements and AI disclosure}

The author first encountered this problem during a summer internship at TTI-Chicago in 2021 and is grateful to Yury Makarychev for several illuminating discussions, particularly for pointing out its connection to Gaussian orthant probabilities. The author has returned to the problem periodically since then and thanks Romain Cosson, Giada Franz, Jia Shi, and Jingze Zhu for helpful conversations. 

All the conceptual contributions in this work are biological. The author came across Mulgund's proof of the Weak Simplex Conjecture and realized the Gaussian comparison inequality could be used to obtain a $(1-o(1))k/2^k$ approximation. This argument appears in the technical overview in its essence. The author then formulated the orthant amplification statement that would suffice to get
an approximation ratio of $k/2^{k}$. GPT 5.6 Sol Max assisted in the lengthy computations that appear in the proof of \cref{thm:zero-comparison}, which sped up the completion time significantly. The author wrote the entire manuscript and takes full responsibility for its content and correctness.

\begingroup
\footnotesize
\printbibliography
\endgroup

\end{document}